\newcommand\ind{\bot\!\!\!\bot}
\newcommand{\g}{\,\vert\,}
\newcommand{\bc}{\begin{center}}
\newcommand{\ec}{\end{center}}
\newcommand{\bit}{\begin{itemize}}
\newcommand{\eit}{\end{itemize}}
\newcommand{\be}{\begin{eqnarray*}}
\newcommand{\ee}{\end{eqnarray*}}
\newcommand{\ben}{\begin{eqnarray}}
\newcommand{\een}{\end{eqnarray}}
\newcommand\M{\mathcal{M}}
\newcommand{\vat}{\mathbb{E}}
\newcommand{\var}{\mathbb{V}\!\mathrm{ar}}
\newcommand{\cov}{\mathbb{C}\mathrm{ov}}
\newcommand{\tr}{\mathrm{tr}}
\newcommand{\G}{\mathcal{G}}
\newcommand{\C}{\mathcal{C}}
\newcommand{\D}{\mathcal{D}}
\newcommand{\pa}{\mathrm{pa}}
\newcommand{\fa}{\mathrm{fa}}
\newcommand{\N}{\mathcal{N}}
\newcommand{\W}{\mathcal{W}}
\theoremstyle{plain}% default
\newtheorem{prop}{Proposition}[section]
\newcommand{\bone}{\bm{1}}
\newcommand{\bzero}{\bm{0}}
\newcommand{\bB}{\bm{B}}
\newcommand{\bC}{\bm{C}}
\newcommand{\bD}{\bm{D}}
\newcommand{\bE}{\bm{E}}
\newcommand{\bI}{\bm{I}}
\newcommand{\bK}{\bm{K}}
\newcommand{\bM}{\bm{M}}
\newcommand{\bR}{\bm{R}}
\newcommand{\bX}{\bm{X}}
\newcommand{\bY}{\bm{Y}}
\newcommand{\bZ}{\bm{Z}}
\newcommand{\bem}{\bm{m}}
\newcommand{\bu}{\bm{u}}
\newcommand{\bv}{\bm{v}}
\newcommand{\bx}{\bm{x}}
\newcommand{\by}{\bm{y}}
\newcommand{\bPhi}{\bm{\Phi}}
\newcommand{\bSigma}{\bm{\Sigma}}
\newcommand{\bPsi}{\bm{\Psi}}
\newcommand{\bOmega}{\bm{\Omega}}
\newcommand{\balpha}{\bm{\alpha}}
\newcommand{\bgamma}{\bm{\gamma}}
\newcommand{\btheta}{\bm{\theta}}
\newcommand{\bmu}{\bm{\mu}}
\begin{document}

% avoid deadly footnotes...
\setcounter{footnote}{3}

\title{Objective Bayes Covariate-Adjusted\\
Sparse Graphical Model Selection\footnote{\textit{Running headline}:
Covariate-Adjusted Sparse DAG Selection}}
% 39 out of 40 characters (SJS max) including spaces

% Possible alternative titles include:
% Objective Bayes Covariance Selection\\
% in Gaussian Multivariate Regression Models\\
% or
% Objective Bayes Graphical Model Selection\\
% in Gaussian Multivariate Regression Models\\

\author{Guido Consonni\\
Universit\`{a} Cattolica del Sacro Cuore\\
%\footnote{Corresponding author}
\url{guido.consonni@unicatt.it}
\and Luca La Rocca
\footnote{Corresponding author}\\
Universit\`a di Modena e Reggio Emilia\\
\url{luca.larocca@unimore.it}}

\maketitle

%\begin{center}
%\Large PRELIMINARY DRAFT\\
%(Please do not circulate)
%\end{center}
%\bigskip

\begin{abstract} % 170 words (SJS max = 150 words)
We present an objective Bayes method for covariance selection
in Gaussian multivariate regression models whose error term has a covariance structure
which is Markov with respect to a Directed Acyclic Graph (DAG).
The scope is covariate-adjusted sparse graphical model selection,
a topic of growing importance especially in the area of
genetical genomics (eQTL analysis).
Specifically, we provide a closed-form expression for
the marginal likelihood of any DAG (with small parent sets)
whose computation virtually requires no subjective elicitation by the user
and involves only conjugate matrix normal Wishart distributions.
This is made possible by a specific form of prior assignment,
whereby only one prior under the complete DAG model need be specified,
based on the notion of fractional Bayes factor.
All priors under the other DAG models are derived using prior modularity,
and global parameter independence, in the terminology of \citet{Geig:Heck:2002}.  
Since the marginal likelihood we obtain is constant within each class
of Markov equivalent DAGs, our method naturally specializes to
covariate-adjusted decomposable graphical models.
\medskip

\noindent\textit{Keywords}:
Bayesian model selection;
Covariate-adjusted graphical model;
Covariance selection;
Decomposable graphical model;
Directed acyclic graphical model;
Fractional Bayes factor;
Gaussian graphical model;
Gaussian multivariate regression;
Marginal likelihood;
Sparse model selection.
\end{abstract}
% 10 keywords listed (SJS max)

\section{Introduction}
\label{sec:intro}

Graphical models are a well-established tool in multivariate statistics.
They allow to simplify high-dimensional distributions,
both in terms of computations and in terms of interpretation,
on the basis of a graph representing independencies between variables.
We assume the reader is familiar with the basic theory of undirected
and acyclic directed graphical models,
as presented for instance in \cite{Cowe:Dawi:Laur:Spie:1999},
or \cite{Laur:1996}; see also \cite{Whit:1990}.

Our interest lies in a collection of $q$ random variables
whose joint distribution, having density with respect to a product measure,
embodies a conditional independence structure
which can be represented by a Directed Acyclic Graph (DAG).
This means that each variable is conditionally independent of
its non-descendants given its parents; see Cowell et al. (1999, sect.~5.3).
Such a distribution is said to be Markov with respect to the DAG.
A~DAG model is a (parametric) family of multivariate distributions
which are Markov with respect to a DAG.
We will consider in particular Gaussian DAG models.
Then, the DAG structure will be reflected in the covariance matrix $\bSigma$:
if the DAG is complete, $\bSigma$ will be unconstrained;
for an incomplete DAG, $\bSigma$ will present constrained entries.
Notice that an unconstrained covariance matrix still has to be
s.p.d.~(symmetric positive definite).

Typically, the DAG structure is unknown,
and we want to infer it from $n$ joint observations of the $q$ variables.
From a Bayesian viewpoint one starts with a prior distribution
on the collection of all DAGs (prior on model space),
as well as with a prior distribution on the parameter space
of each given DAG (parameter prior).
Given these prior inputs, Bayesian inference produces a posterior probability
on the space of all DAGs, which summarizes all the uncertainty
in the light of the available data.
Several papers have addressed this problem for the case in which
the $n$ observations are i.i.d.~(independent and identically distributed)
conditionally on the parameters of the model;
see for instance \citet{Dawi:Laur:1993, Spieetal:1993, Hecketal:1995, Madi:Etal:1996}.
Of special interest for this paper is the work by \citet{Geig:Heck:2002};
see also \citet{Cons:Laro:2012} and \citet{Kuip:Moff:Heck:2014} for corrections.
\citet{Geig:Heck:2002} listed a set of assumptions on the collection of parameter priors
(across DAGs) which permit their construction starting from a single parameter prior
under a complete DAG (a DAG with all pairs of vertices directly connected).
This represents a dramatic simplification because:
i)~the specification of only one distribution is required,
while all the remaining priors are derived from this one;
ii)~the latter distribution is placed on an unconstrained parameter space
describing the model with no independencies.
In the Gaussian case ii) means that one can use a standard Inverse Wishart
on the covariance matrix, equivalently a Wishart on the corresponding precision matrix
(defined as the inverse of the covariance matrix)
so that the marginal likelihood can be expressed in closed form.

% As far as the inferential analysis is concerned, a further simplification ensues,
% because the parameter prior under the complete DAG model must be conjugate to
% the sampling family in the two most useful settings. Specifically,
% for the Gaussian case, the unconstrained prior must be Normal-Wishart
% \citep{Geig:Heck:2002}, whereas in  the discrete multinomial setting,
% under a further local independence assumption, it must be Dirichlet
% Geiger \& Heckerman (1997; The Annals of Statistics 1997, Vol. 25, No. 3, 1344--1369:
% A CHARACTERIZATION OF THE DIRICHLET DISTRIBUTION.

Different DAGs may define the same DAG model,
in which case they are called Markov equivalent.
Accordingly, the set of all DAGs for the $q$ variables
can be partitioned into Markov equivalence classes
(corresponding to distinct DAG models).
If DAGs are meant to specify exclusively conditional independencies,
as opposed to causal relationships \citep {Laur:2001, Dawi:2003},
then all DAGs specifying the same set of conditional independencies
should be regarded as indistinguishable using observational data.
The  method by \citet{Geig:Heck:2002} ensures that DAGs belonging to
the same equivalence class obtain the same marginal likelihood.
As a consequence, their method can also be used to infer decomposable graph structures,
by simply replacing each structure with an equivalent DAG (no matter which).

Despite its many advantages, the inferential procedure proposed by \citet{Geig:Heck:2002}
still  requires the  specification of a  potentially high-dimensional parameter prior
(especially  in large $q$ settings).
This naturally suggests an objective Bayes approach,
which is virtually free from prior elicitation.
We carried out this program in \citet{Cons:Laro:2012} for Gaussian DAG models,
using the method of the fractional Bayes factor \citep{Ohag:1995}.
Our findings were consistent with, and extended,
those presented in \citet{Carv:Scot:2009} for Gaussian decomposable graphical models,
which relied on the use of the hyper-inverse Wishart distribution \citep{Leta:Mass:2007}.

%PARTE CHE PRESENTA CONTESTO E RECENTE LETTERATURA SU COVARIATE-ADJUSTED ESTIMATION

More recently, research has shifted towards
\textit{covariate-adjusted} estimation of covariance matrices.
Motivation for this research stems from the analysis of genetical genomics data
(eQTL analysis) where the aim is to study conditional dependence structures
of gene expressions after the confounding genetic effects  %on gene expressions
are taken into account.
Indeed, an important finding from many genetical genomics experiments is that
the gene expression level of many genes is inheritable and can be partially explained
by genetic variation; see e.g.~\citet{Brem:Krug:2005}.
Since some genetic variants have effects on the expression of multiple genes,
they act as confounders when trying to learn the association between the genes.
Accordingly, ignoring the effects of genetic variants on the gene expression levels
can lead to both false positive and false negative associations in the gene network graph.
The effect of genetic variants on gene expression therefore needs to be adjusted
in estimating the high-dimensional precision matrix
\citep{Caietal:2013, Chenetal:2013}.

The problem is usually formulated as one of joint estimation
of multiple regression coefficients and a precision matrix,
with the latter assumed to be Markov with respect to some graph.
Since these models are used in high-dimensional settings,
both the regression and the covariance structure are assumed to be sparse.
\citet{Rothetal:2010}, \citet{Yin:Li:2011} and \citet{Chenetal:2013}
assume that the error term is multivariate normal;
this assumption is relaxed in the paper by \citet{Caietal:2013}.
The literature in the area, as exemplified in all the  papers  above,
is  carried out within a constrained minimization approach (under a suitable norm).
Contributions in the Bayesian framework are still very limited.
A notable exception is \citet{Bhad:Mall:2013} who perform variable and covariance
selection jointly, using decomposable graphs and weakly informative hierarchical priors.

%FINE PARTE GENERALE SU CONTESTO

In this paper we deal with covariate-adjusted selection of Gaussian DAG models
within an objective Bayes framework.
Specifically, we reconsider the foundations
of the approach by \citet{Geig:Heck:2002},
originally presented for the case of
i.i.d.~sampling,
and show that it can be meaningfully extended to the
multivariate regression setting.
We provide closed-form expressions for the marginal likelihood of any DAG,
then we propose an objective Bayes procedure, based on the fractional Bayes factor,
which works for DAGs with small parent sets.
Our results extend to the regression setup those of \citet{Cons:Laro:2012}
and \citet{Carv:Scot:2009}; they also complement those of \citet{Bhad:Mall:2013},
both because they are derived within an objective framework,
and because they cope with a broader family of graphs.

% and additionally simplifies the analysis both conceptually and computationally,
% because it relies on the standard notion of Wishart prior for concentration matrices,
% rather than that of hyper-inverse Wishart for covariance matrices.

% the special case with no predictors (intercept only) fully accounts for uncertainty
% when the i.i.d.~data matrix consists of centered variables

% Our formula for the marginal likelihood also allows for variable selection,
% if one likes, but we do not stress this aspect here, because we bring no novelty to it.

The paper is organized as follows. Section~\ref{sec:matrixDistrib} reviews
the matrix distributions used in the paper, and section~\ref{sec:multivariateRegression}
presents the Gaussian multivariate regression setup. Section~\ref{sec:ob}
illustrates our objective framework, while section~\ref{sec:covsel} contains
our proposal for covariance selection. Finally, section~\ref{sec:disc}
briefly discusses our work.

\section{Matrix distributions}
\label{sec:matrixDistrib}

Consider $n$ independent observations on $q$ continuous dependent variables,
arranged in an $n \times q$ response matrix:
\begin{equation}
\label{eq:dataMatrix}
\bY=
\left(\begin{array}{c}
  	  \by_{1}^\top\\
      \vdots\\
  	  \by_{n}^\top
  	  \end{array}
\right)=
\left(\begin{array}{ccc}
      \bY_{1} & \ldots & \bY_q\\
      \end{array}
\right),
\end{equation}
where $\by_i=(y_{i1}, \ldots, y_{iq})^\top$ is the $i$-th observation,
and $\bY_j=(y_{1j}, \ldots, y_{nj})^\top$ represents the observations
on the $j$-th variable. Let  $\bX$ be a design matrix  with $n$ rows
and $p+1$ columns ($p$~predictors plus intercept) which we assume known
without error; denote by $\bx_1^\top,\dots,\bx_n^\top$  its rows.
We model the observations as
$\by_i\g\bB,\bSigma\sim\N_q(\bB^\top\bx_i, \bSigma)$,
independently over $i=1,\dots,n$,
where $\bB$ is an unconstrained
$(p+1)\times q$ matrix,
$\bSigma$ is an s.p.d.~$q\times q$ matrix,
and $\N_q(\bmu,\bSigma)$ denotes the $q$-variate normal distribution
with mean vector $\bmu$ and covariance matrix $\bSigma$.
The $j$-th column of $\bB$, namely $\bB_j$, is the vector of regression coefficients
for the $j$-th variable, and $\vat(\bY\g\bB,\bSigma)=\bX\bB$.
The distribution of $\bY$, given $\bB$ and~$\bSigma$,
is a special case of the matrix normal distribution;
the general case, reviewed in section~\ref{subsec:matrixNormal},
will give a conjugate prior for $\bB$ (given $\bSigma)$.
A conjugate prior for $\bSigma^{-1}$ will be given by the Wishart distribution,
which is reviewed in section~\ref{subsec:Wishart}.

\subsection{Matrix normal} %distributions
\label{subsec:matrixNormal}
We say that the random matrix $\bY$ follows
the \emph{matrix normal distribution} with mean matrix $\bM$,
row covariance matrix $\bPhi$, and column covariance matrix $\bSigma$,
when $\mathrm{vec}(\bY)$ follows the multivariate normal distribution
with mean vector $\mathrm{vec}(\bM)$ and covariance matrix
$\bSigma\otimes\bPhi$; recall that $\mbox{vec}(\bY)$ is the vector obtained from~$\bY$
by stacking its columns on top of one another,
while $\otimes$\ denotes the Kronecker product.
If $\bY$ is an $n\times q$ matrix, $\bM$ will be
an $n\times q$ matrix, $\bPhi$ an s.p.d.~$n \times n$ matrix,
$\bSigma$~an s.p.d.~$q \times q$ matrix, and we will write
\begin{equation}
\label{eq:sampling}
\bY\g\bM,\bPhi, \bSigma\sim\N_{n,q}(\bM,\bPhi,\bSigma);
\end{equation}
see \citet[p.~55]{Gupt:Naga:2000}, and \cite{Dawi:1981}, for more information.
We obtain the special case where $\bY$ is a response matrix,
previously described and taken up in section~\ref{sec:multivariateRegression},
by letting $\bM=\bX\bB$, and $\bPhi=\bI_n$,
where $\bI_n$ is the $n\times n$ identity matrix.

Let $\phi_{i i^{\prime}}$ denote the generic element of $\bPhi$,
and $\sigma_{j j^{\prime}}$ the generic element of $\bSigma$.
Clearly, we have $\vat(\bY \g \bM, \bPhi,\bSigma)=\bM$.
Moreover, we have $\cov(y_{ij},y_{i^{\prime}j^{\prime}}\g\bM,\bPhi,\bSigma)=
\phi_{i i^{\prime}}\sigma_{j j^{\prime}}$,
so that $\var(\by_i \g \bM, \bPhi, \bSigma)=\phi_{ii}\bSigma$, $i=1, \ldots, n$,
whereas $\var(\bY_j \g \bM, \bPhi, \bSigma)=\sigma_{jj}\bPhi$, $j=1, \ldots, q$,
with $\var(\bu)$ denoting the covariance matrix of the random vector $\bu$.
More generally, we find
$\cov(\by_i,\by_{i^{\prime}}\g \bM,\bPhi,\bSigma)= \phi_{i i^{\prime}}\bSigma$
and $\cov(\bY_j,\bY_{j^{\prime}} \g \bM,\bPhi,\bSigma)=
\sigma_{j j^{\prime}}\bPhi$,
if we denote by $\cov(\bu,\bv)$ the cross-covariance matrix
of $\bu$ and $\bv$, whose elements are the covariances between all pairs
consisting of one element in $\bu$ and the other in~$\bv$.
Notice that $\cov(\bu,\bu)=\var(\bu)$.

Reparameterizing from $\bSigma$ s.p.d.~to $\bOmega=\bSigma^{-1}$ s.p.d.,
and from $\bPhi$ s.p.d.~to $\bK=\bPhi^{-1}$ s.p.d., which we will find useful
for Bayesian analysis, the density of the matrix normal distribution
$\N_{n,q}(\bM,\bK^{-1},\bOmega^{-1})$ can be written as
\begin{equation}
\label{eq:matrixNormal}
f(\bY \g \bM, \bK,\bOmega)=
\frac{|\bK|^{\frac{q}{2}}| \bOmega|^{\frac{n}{2}}}{(2\pi)^{\frac{nq}{2}}}
\exp\left\{-\frac{1}{2}\tr\left(\bOmega(\bY-\bM)^\top\bK(\bY-\bM)
                         \right)\right \},
\end{equation}
where $|\bPsi|$ denotes the determinant of the matrix $\bPsi$,
and $\tr(\bPsi)$ its trace. Formula~(\ref{eq:matrixNormal}) follows from the density of
$\mathrm{vec}(\bY)\g\mathrm{vec}(\bM),\bOmega^{-1}\otimes\bK^{-1}$,
keeping into account that $\tr(\bOmega\bPsi\bK\bPsi^\top)=\tr(\bPsi^\top\bOmega\bPsi\bK)$
is the value at $(\bPsi,\bPsi)$ of the bilinear form associated to
$\bOmega\otimes\bK=(\bOmega^{-1}\otimes\bK^{-1})^{-1}$,
which is the precision matrix of $\mathrm{vec}(\bY)$,
and that $|\bOmega\otimes\bK|=|\bOmega|^n|\bK|^q$;
see \citet[App.~B]{Laur:1996}.
We call $\bK$ the row precision matrix of~$\bY$,
and $\bOmega$ its column precision matrix.
Clearly, whenever $\bY\g\bM,\bK,\bOmega\sim N_{n,q}(\bM,\bK^{-1},\bOmega^{-1})$,
we have $\bY^\top\g\bM,\bK,\bOmega\sim\N_{q,n}(\bM^\top, \bOmega^{-1}, \bK^{-1})$,
which means $\mathrm{vec}(\bY^\top)\g\bM,\bK,\bOmega\sim
\N_{qn}(\mathrm{vec}(\bM^\top),\bK^{-1}\otimes\bOmega^{-1})$.

Now let $J$ be a proper subset of $\{1,\dots,q\}$,
and denote by $\bY_J$ the submatrix of~$\bY$
consisting of the columns indexed by $J$.
It is immediate to check that $\mathrm{vec}(\bY_J)$ is multivariate normal
with mean vector $\mathrm{vec}(\bM_J)$ and covariance matrix $\bSigma_{JJ}\otimes\bPhi$,
where $\bSigma_{JJ}$ is the submatrix of $\bSigma$
consisting of the rows and columns indexed by $J$;
see \citet[prop.~(C.4)]{Laur:1996}.
Hence, \emph{column marginalization} results in
\ben
\label{eq:marginalization}
\bY_J\g\bM,\bPhi,\bSigma\sim\N_{n,|J|}(\bM_J,\bPhi,\bSigma_{JJ}).
\een
Notice that, if $\bM=\bX\bB$, then $\bM_J=\bX\bB_J$.

Finally, letting $\bar{J}=\{1,\dots,q\}\setminus J$,
it is well known that $\mathrm{vec}(\bY_J)\g\mathrm{vec}(\bY_{\bar{J}})$
is multivariate normal with mean vector
$\mathrm{vec}(\bM_J)-(\bOmega_{JJ}^{-1}\otimes\bK^{-1})(\bOmega_{J\bar{J}}\otimes\bK)
\mathrm{vec}(\bY_{\bar{J}}-\bM_{\bar{J}})$,
and precision matrix $\bOmega_{JJ}\otimes\bK$,
where $\bOmega_{JJ}^{-1}=(\bOmega_{JJ})^{-1}$;
see \citet[prop.~C.5]{Laur:1996}.
Since $(\bOmega_{JJ}^{-1}\otimes\bK^{-1})(\bOmega_{J\bar{J}}\otimes\bK)=
(\bOmega_{JJ}^{-1}\bOmega_{J\bar{J}})\otimes(\bK^{-1}\bK)=
(\bOmega_{JJ}^{-1}\bOmega_{J\bar{J}})\otimes\bI_n$,
we find
\begin{equation}
\label{eq:conditioning}
\bY_J\g\bY_{\bar{J}},\bM,\bK,\bOmega\sim
\N_{n,|J|}(\bM_J-(\bY_{\bar{J}}-\bM_{\bar{J}})\bOmega_{\bar{J}J}\bOmega_{JJ}^{-1},
\bK^{-1},\bOmega_{JJ}^{-1})
\end{equation}
for \emph{column conditioning}.
In the case $\bK=\bI_n$, formula (\ref{eq:conditioning}) returns
$\by_{iJ}\g\bM,\bK,\bOmega\sim
\N_{|J|}(\bem_{iJ}-\bOmega_{JJ}^{-1}\bOmega_{J\bar{J}}(\by_{i\bar{J}}-\bem_{i\bar{J}}),
\bOmega_{JJ}^{-1})$, independently over $i=1,\dots,n$,
where $\by_{iJ}$ and $\bem_{iJ}$ are the subvectors of $\by_i$ and $\bem_i$,
respectively, consisting of the elements indexed by $J$,
while $\bem_i^\top$ is the $i$-th row of $\bM$.

\subsection{Wishart} %distributions
\label{subsec:Wishart}

Let $\bOmega$ be a $q \times q$ \textit{unconstrained} s.p.d.~random matrix.
We will write $\bOmega \sim \W_q(a,\bR)$ to mean that $\bOmega$ follows
a Wishart distribution with density
\begin{equation}
p(\bOmega)=\frac{1}{2^\frac{aq}{2}\Gamma_q(\frac{a}{2})}
|\bR|^{\frac{a}{2}}|\bOmega|^{\frac{a-q-1}{2}}
\exp\left\{-\frac{1}{2}\tr(\bOmega\bR)\right\},
\label{eq:Wishart}
\end{equation}
$\bOmega$ s.p.d., and $p(\bOmega)=0$, otherwise, % $\bOmega$ not s.p.d.,
where $\bR$ is a $q \times q$ s.p.d.~matrix,
$a$ is a scalar strictly greater than $q-1$,
and $\Gamma_q(\frac{a}{2})=\pi^{\frac{q(q-1)}{4}}
\prod_{j=1}^q \Gamma\left(\frac{a}{2}+\frac{1-j}{2}\right)$
is the $q$-dimensional gamma function at $a/2$
(generalizing $\Gamma(a/2)=\int_0^\infty z^{\frac{a}{2}-1}e^{-z}dz$).
As for parameter interpretation,
it can be shown that $\vat[\bOmega|\bR,a]=a\bR^{-1}$.
Our notation $\W_q(a,\bR)$ for the density~(\ref{eq:Wishart}) is essentially that of
%\cite{Geig:Heck:2002}, following
\citet[p.~59]{DeGr:1970};
other authors \citep{Pres:1982,Laur:1996} would use
$\bR^{-1}$ in place of $\bR$.
%Notice that \citet[p.~108]{Pres:1982} requires $a\geq p$,
%because he implicitly assumes that $a$ is an integer.

We now recall some useful results. %; see also \citet{Cons:Laro:2012}.
Let $\bOmega$ be the precision matrix of $\by\g\bmu,\bSigma\sim\N_q(\bmu,\bSigma)$,
that is, $\bOmega=\bSigma^{-1}$. Think of $\by$ as the generic row
of the matrix $\bY$ (dropping subscript $i$).
Partition $\bSigma$ and $\bOmega$ into the blocks
corresponding to the variables indexed by $J$ and its complement $\bar{J}$,
for a given proper subset $J$ of $\{1,\dots,q\}$:
\begin{equation}
    \bSigma=
  \left[
    \begin{array}{cc}
  	\bSigma_{JJ} & \bSigma_{J\bar{J}}\\
  	\bSigma_{\bar{J}J} & \bSigma_{\bar{J}\bar{J}}
  	\end{array}
  \right],\quad
    \bOmega=
  \left[
    \begin{array}{cc}
  \bOmega_{JJ} & \bOmega_{J\bar{J}}\\
  \bOmega_{\bar{J}J} & \bOmega_{\bar{J}\bar{J}}
  \end{array}
  \right].
\label{eq:partitionedOmega}
\end{equation}
The block $\bSigma_{JJ}$ is the marginal covariance matrix of $\by_J$
(obtained from $\by$ by selecting the elements of $\by$ indexed by $J$).
Denote by $\bSigma_{JJ \cdot \bar{J}}$ the conditional covariance matrix
$\var(\by_J \g \by_{\bar{J}})$ of $\by_J$ given $\by_{\bar{J}}$
(obtained from $\by$ by complementary selection).
Then
\begin{equation}
\bSigma_{JJ \cdot \bar{J} } = \bSigma_{JJ}-
\bSigma_{J\bar{J}}\bSigma_{\bar{J}\bar{J}}^{-1}\bSigma_{\bar{J}J}=
\bOmega_{JJ}^{-1},
\label{eq:condVar}
\end{equation}
that is, $\bSigma_{JJ \cdot \bar{J}}$ is the \emph{Schur complement}
of $\bSigma_{\bar{J}\bar{J}}$ in $\bSigma$,
as well as the inverse of $\bOmega_{JJ}$.

Formula (\ref{eq:condVar}) expresses a relationship between four blocks of $\bSigma$
and a corresponding block of $\bSigma^{-1} = \bOmega$. Hence,
by switching the roles of $\bSigma$ and $\bOmega$, we obtain
\begin{equation}
\bSigma_{JJ}= \left(\bOmega_{JJ}-\bOmega_{J\bar{J}}
\bOmega_{\bar{J}\bar{J}}^{-1}\bOmega_{\bar{J}J} \right)^{-1}=
\bOmega_{JJ \cdot \bar{J}}^{-1},
\label{Sigmavv}
\end{equation}
where $\bOmega_{JJ \cdot \bar{J}}^{-1}$ is to be interpreted as
Schur complementation followed by inversion.
Therefore, working with covariance matrices, marginalization corresponds to
submatrix extraction and conditioning to Schur complementation,
whereas, working with precision matrices, marginalization corresponds to
Schur complementation and conditioning to submatrix extraction.

Now let $\bOmega\sim\W_q(a,\bR)$, with $\bR$ an s.p.d.~matrix and $a>q-1$.
If $\bOmega$ is partitioned as in (\ref{eq:partitionedOmega}),
and $\bR$ is partitioned accordingly, then
\ben
\bOmega_{JJ\cdot\bar{J}} &\sim& \W_{|J|}(a-|\bar{J}|,\bR_{JJ}),
\label{eq:distrOmegaMar}
\een
independently of $(\bOmega_{J\bar{J}},\bOmega_{\bar{J}\bar{J}})$,
where of course $|\bar{J}|=q-|J|$; see \citet[prop.~C.15]{Laur:1996}
who also gives the distribution of $(\bOmega_{J\bar{J}},\bOmega_{\bar{J}\bar{J}})$.

\section{Gaussian multivariate regression}
\label{sec:multivariateRegression}

We return to the scenario discussed in the Introduction,
leading to covariate-adjusted graphical model selection,
and to the response matrix $\bY$ introduced at the beginning
of section~\ref{sec:matrixDistrib}.
Denote by $\bZ$ the $n \times p_\star$ matrix of all possible $p_\star$ predictors.
In eQTL analysis $p_\star$ is typically very large,
and often much larger than $n$. However, because of sparsity considerations,
only models of the type $\bY=\bX \bB+\bE$ need be taken into consideration,
where $\bX$ is an $n \times (p+1)$ design matrix having the unit vector $\bone_n$
as first column and $p\ll p_\star$ predictors from $\bZ$ as remaining columns,
while $\bE$ is an $n \times q$ matrix of error terms
with distribution $\N_{n,q}(\bzero, \bI_n,\bOmega^{-1})$,
and $\bB$ is a $(p+1) \times q $ matrix of regression coefficients
($\bzero$ being the $n\times q$ zero matrix).
Hence, in principle, it is not unreasonable to assume $n> p+1$;
in practice $p$  will be much smaller than~$n$,
as we illustrate in the Discussion.
Notice that the $p$ predictors to be used will not be known \emph{a priori},
and therefore it will be necessary to carry out variable selection
together with covariance selection; this will be feasible
using the marginal likelihoods corresponding to different design matrices.
For simplicity, we will use a single $\bX$ in our notation
(without explicitly conditioning on it).

In section~\ref{subsec:conjugateAnalysis} we summarize the main features of
a standard conjugate analysis of the model
\begin{equation}
\label{eq:multivRegrModel}
\bY\g\bB,\bOmega\sim\N_{n,q}(\bX\bB,\bI_n,\bOmega^{-1}),
\end{equation}
with $\bB$ unconstrained.
This is done for completeness and for the benefit of the reader,
so that the subsequent sections can be followed more easily; see also
\citet{Rowe:2003}, whose notation is somewhat different from ours.
Next, in section~\ref{subsec:marginalDataDistrib}, we derive
the marginal data distribution for a subset of variables
(selected columns of~$\bY$) which represents the building block
for computing the marginal likelihood of a general DAG model
(as detailed in section~\ref{subsec:DAG}).
We remark that, because of the theory presented in section \ref{subsec:DAG},
we need only  consider an unconstrained $\bOmega$ even when we deal with
covariance matrices having a graphical structure.
This is indeed a major simplification characterizing the approach
taken in this paper; we will return to this issue later on.

\subsection{Conjugate analysis}
\label{subsec:conjugateAnalysis}

If we denote by $ \hat{\bB}=(\bX^\top\!\bX)^{-1}\bX^\top\bY$
the least squares estimator of $\bB$, the likelihood function can be written as
\begin{equation}
\label{eq:lik}
f(\bY \g \bB,\bOmega)=
\frac{|\bOmega|^{\frac{n}{2}}}{(2\pi)^{\frac{nq}{2}}}
\exp\left\{-\frac{1}{2}\tr\left(\bOmega\{(\bB-\hat{\bB})^\top\bX^\top\!\bX(\bB-\hat{\bB})+
\hat{\bE}^\top\!\hat{\bE}\}
\right) \right \},
\end{equation}
where $\hat{\bE}=(\bY-\bX\hat{\bB})$ is the matrix of residuals.
Hence, a conjugate prior for $(\bB,\bOmega)$
is obtained by letting
\be
\bB \g \bOmega & \sim & \N_{p+1,q}(\underline{\bB}, \bC^{-1}, \bOmega^{-1}),\\
\bOmega & \sim & \W_q(a,\bR),
\ee
which results in the prior density
\begin{equation}
\label{eq:priorDensity}
p(\bB,\bOmega)=
\frac{|\bOmega|^{\frac{(p+1)+(a-q-1)}{2}}}{K(\bC,\bR,a)}
\exp\left\{-\frac{1}{2}\tr\left(
\bOmega\{(\bB-\underline{\bB})^\top\bC(\bB-\underline{\bB})+
\bR\}\right)\right\},
\end{equation}
where
\begin{equation}
\label{eq:priorNormalizConst}
K(\bC,\bR,a)=
\frac{(2\pi)^{\frac{q(p+1)}{2}}2^\frac{aq}{2}\Gamma_q(\frac{a}{2})}
{|\bC|^\frac{q}{2}|\bR|^\frac{a}{2}}
\end{equation}
is the prior normalizing constant.
The prior (\ref{eq:priorDensity}) is a matrix normal Wishart.

Some algebraic manipulations show that the posterior distribution of $(\bB,\bOmega)$ is
\be
\bB \g \bOmega,\bY &\sim&
\N_{p+1,q}(\overline{\bB}, (\bC+\bX^\top\!\bX)^{-1}, \bOmega^{-1} ), \\
\bOmega \g \bY & \sim & \W_q(a+n,\bR+\hat{\bE}^\top\!\hat{\bE}+\bD),
\ee
where $\overline{\bB}=(\bC+\bX^\top\!\bX)^{-1}(\bX^\top \bY +\bC\underline{\bB})$
is the posterior expectation (matrix) of $\bB$, and $\bD=
(\underline{\bB}-\hat{\bB})^\top\{\bC^{-1}+(\bX^\top\!\bX)^{-1}\}^{-1}
(\underline{\bB}-\hat{\bB})$ is a measure of discrepancy
between $\underline{\bB}$ and $\hat{\bB}$ (prior and data).
Prior-to-posterior updating thus takes the form
\begin{equation}
\label{eq:updatingHyperCoeff}
\underline{\bB} \mapsto \overline{\bB},\quad \bC \mapsto \bC+\bX^\top\!\bX,\quad
a \mapsto a+n,\quad \bR \mapsto \bR+\hat{\bE}^\top\!\hat{\bE}+ \bD,
\end{equation}
and the posterior density $p(\bB,\bOmega\g\bY)$ is as in (\ref{eq:priorDensity})
with hyper-parameters updated by~(\ref{eq:updatingHyperCoeff});
the posterior normalizing constant will be given by
\begin{equation}
\label{eq:postNormalizConst}
K(\bC+\bX^\top\!\bX,\bR+\hat{\bE}^\top\!\hat{\bE}+\bD,a+n),
\end{equation}
with the function $K(\cdot,\cdot,\cdot)$ defined in (\ref{eq:priorNormalizConst}).

\subsection{Marginal data distribution}
\label{subsec:marginalDataDistrib}

The marginal distribution of the matrix $\bY$ can be obtained as
$$
m(\bY)=\frac{f(\bY\g\bB,\bOmega)p(\bB,\bOmega)}{p(\bB,\bOmega\g\bY)},
$$
which in light of conjugacy gives
\begin{equation}
\label{marginalDataDist}
m(\bY)=\frac{K(\bC+\bX^\top\!\bX,\bR+\hat{\bE}^\top\!\hat{\bE}+\bD,a+n)}
{(2 \pi)^{\frac{nq}{2}}K(\bC,\bR,a)},
\end{equation}
that is, up to a multiplicative factor, % $(2 \pi)^{-\frac{nq}{2}}$
the ratio of the posterior and prior normalizing constants,
(\ref{eq:postNormalizConst}) and (\ref{eq:priorNormalizConst}),
respectively.

In the sequel, we will also need the marginal distribution of
selected columns of the data matrix $\bY$,
corresponding to a proper subset $J$ of the full set of $q$ response variables.
Let $\bY_J$ be the $n\times |J|$ selected data submatrix,
and $\bB_J$ be the corresponding $(p+1)\times |J|$ submatrix of $\bB$,
whose columns contain the regression coefficients for the selected responses.
When restricted to the set $J$ of response variables,
by the results presented in section~\ref{sec:matrixDistrib},
the Gaussian multivariate regression model (\ref{eq:multivRegrModel})
can be written as
$$
\bY_J\g\bB_J,\bOmega_{JJ\cdot\bar{J}}\sim
\N_{n,|J|}\left(\bX\bB_J,\bI_n,\bOmega_{JJ\cdot\bar{J}}^{-1}\right),
$$
with induced prior
\be
\bB_J\g\bOmega_{JJ\cdot\bar{J}} &\sim & \N_{p+1,|J|}
\left(\underline{\bB}_J, \bC^{-1}, \bOmega_{JJ \cdot \bar{J}}^{-1}\right),\\
\bOmega_{JJ \cdot \bar{J}} & \sim & \W_{|J|}(a-|\bar{J}|,\bR_{JJ}),
\ee
where $\underline{\bB}_J$ is the appropriate submatrix of $\underline{\bB}$.

One readily sees that the formal structure of model and prior
for a subset $J$ of response variables is the same as for the full data matrix.
As a consequence, the marginal data distribution for the submatrix $\bY_J$
is given by (\ref{marginalDataDist}) with the following substitutions:
$$
q \mapsto |J|,\; \bR \mapsto \bR_{JJ},\; a \mapsto a-|\bar{J}|,\;
\underline{\bB} \mapsto \underline{\bB}_J,\;  \hat{\bB} \mapsto \hat{\bB}_J,\;
\hat{\bE} \mapsto \hat{\bE}_J,\; \bD \mapsto \bD_{JJ},
$$
while $n$, $\bC$ and  $\bX$ remain unchanged.

\section{Objective analysis}
\label{sec:ob}
%based on fractional priors

We assume the reader is familiar with the basic concepts of model selection
from the Bayesian perspective, as described for instance in \citet[ch.~7]{Ohag:Fors:2004}.
Here, in section~\ref{subsec:fractionalbf},
we provide some background on \emph{objective Bayes} model selection,
focusing in particular on a proposal by \citet{Ohag:1995}.
Then, in section \ref{subsec:fractionalMarLik},
we give the expression for the marginal data distribution
of a generic subset of columns of $\bY$ under the prior implied by such proposal;
this will be instrumental in the construction of the marginal likelihood of a DAG model
given in section~\ref{subsec:DAG}.

\subsection{Fractional parameter priors}
\label{subsec:fractionalbf}
% this subsection derives from our paper in SJS 2012
% but it has been edited quite a lot

Let $\M_1,\dots,\M_K$ be a collection of Bayesian models for the same observable~$\bY$.
Each model $\M_k$, $k=1,\dots,K$, consists of a family of sampling densities
$f_k(\bY\g\btheta_k)$, indexed by a model specific parameter $\btheta_k$,
and of a prior density $p_k(\btheta_k)$ on $\btheta_k$,
which we assume to be \textit{proper}.
We focus on the comparison of $\M_k$ with $\M_{k^\prime}$
through the Bayes factor.
The Bayes factor for $\M_k$ against $\M_{k^\prime}$ is defined as
$BF_{kk^\prime}(\bY)=m_k(\bY)/m_{k^\prime}(\bY)$,
where $m_k(\bY)=\int f_k(\bY\g\btheta_k)p_k(\btheta_k) d\btheta_k$
is the marginal density of $\bY$ under $\M_k$,
also known as the marginal likelihood of $\M_k$.

In lack of substantive prior information,
we would like to take $p_k(\btheta_k)=p_k^D(\btheta_k)$
for some objective default (non-informative) parameter prior $p_k^D(\btheta_k)$.
However, objective priors are often improper
and they cannot be naively used to compute Bayes factors,
even when the marginal likelihoods $m_k(\bY)$ are finite and non-zero,
because of the presence of arbitrary constants
which do not cancel out in their ratios.
\cite{peri:2005} reviews several proposals put forward to address this issue.
In this paper, we take advantage of the fractional Bayes factor
originally introduced by \cite{Ohag:1995}; see also \citet{Ohag:Fors:2004}.

Let $b=b(n)$, $0<b<1$, be a fraction of the number of observations $n$. Define
\ben
\label{fractionalMarginalLik}
m_k(\bY;b)=\frac{\int f_k(\bY\g\btheta_k)p_k^D(\btheta_k)d\btheta_k}
{\int f_k^b(\bY\g\btheta_k)p_k^D(\btheta_k)d\btheta_k},
\een
where $f_k^b(\bY\g\btheta_k)$ is the sampling density under model $\M_k$
raised to the $b$-th power, and the two integrals are assumed to be finite and non-zero.
The \emph{fractional marginal likelihood}~(\ref{fractionalMarginalLik}) of model $\M_k$,
can be rewritten as
$$
m_k(\bY;b)=\int f_k^{1-b}(\bY\g\btheta_k)p_k^F(\btheta_k\g b,\bY)d\btheta_k,
$$
where $p_k^F(\btheta_k\g b,\bY) \propto f_k^b(\bY\g\btheta_k)p_k^D(\btheta_k)$
is the implied \emph{fractional prior} (actually a ``posterior'' based on
the fractional likelihood and the default prior).
The fractional Bayes factor for $\M_k$ against $\M_{k^\prime}$ is then defined as
the ratio of $m_k(\bY;b)$ to $m_{k^\prime}(\bY;b)$.
In essence, a fraction of the data is used to obtain a proper prior,
which is then applied to the remaining fraction.

Clearly, the fractional prior depends on the choice of $b$.
Usually $b$ will be small, so that dependence of the prior on the data will be weak.
Consistency is achieved as long as $b \rightarrow 0$ for $n \rightarrow \infty$.
\citet[sect.~4]{Ohag:1995} suggests $b=n_0/n$ as a default choice,
where $n_0$ is the minimal (integer) training sample size
for which the fractional marginal likelihood is well defined,
together with a couple of alternative choices,
to be used when robustness is an issue. \cite{More:1997} has an argument
according to which the default choice is the only valid one,
and we stick to this choice in this paper.

%Three possible choices:
%i) $b=n_0/n$, where $n_0$ is the minimal (integer) training sample size
%for which the fractional marginal likelihood is well defined;
%ii) $b=\max\{n_0,\sqrt{n}  \}/n$;
%iii) $b=\max\{n_0,\log n  \}/n$.
%Choice i) is suggested as the standard option,
%when robustness issues are of little concern,
%while ii) is recommended when robustness is a serious concern,
%and iii) represents an intermediate option.

\subsection{Fractional marginal likelihoods}
\label{subsec:fractionalMarLik}

Consider the Gaussian multivariate regression model (\ref{eq:multivRegrModel}).
We start from the prior
\ben
\label{pDOmega}
p^D(\bB,\bOmega) \propto |\bOmega|^{\frac{a_D-q-1}{2}},
\een
$\bOmega$ s.p.d., which is flexible enough to accommodate different choices
of default priors. In particular, $a_D=0$ gives
$p^D(\bB,\bOmega) \propto |\bOmega|^{\frac{-(q+1)}{2}}$,
equivalently $p^D(\bB,\bSigma) \propto |\bSigma|^{\frac{-(q+1)}{2}}$
for $\bSigma=\bOmega^{-1}$, because the Jacobian of $(\bB,\bOmega) \mapsto (\bB,\bSigma)$
is proportional to $|\bSigma|^{−(q+1)}$.
% I say proportional because the transformation also involves $\bB$.
This is the ``independence'' Jeffreys prior, that is,
the prior obtained by multiplying the Jeffreys priors for the two parameters
assuming the other one is known;
see \citet[sect.~3.6.2 and (14.2.7)]{Pres:1982}.
Alternatively, $a_D=q-1$ gives
$p^D(\bB,\bOmega)\propto|\bOmega|^{-1}$,
or $p^D(\bB,\bSigma) \propto |\bSigma|^{-q}$.
Both these priors are discussed in \citet{Geis:Corn:1963},
whereas \citet{Geis:1965} focusses more deeply on the independence Jeffreys prior.
\citet{Sun:Berg:2007} present further objective priors for the multivariate normal model.

%\begin{quote}
%GUIDO: Sarebbe da chiedersi perch\'e \citet{Carv:Scot:2009} abbiamo usato
%proprio la prior $|\bSigma|^{-q}$. In realt\`a loro partono da una hyper inverse Wishart
%(sulle cricche) per un modello grafico decomposable, la quale tuttavia nel caso del
%grafo completo \`e proprio $|\bSigma|^{-q}$ e noi abbiamo mostrato che il loro risultato
%si ottiene proprio con questa prior su $\bSigma$ completa. Un motivo potrebbe essere
%che con questa prior $n_0$ pu\`o essere scelto pi\`u piccolo che con le altre prior.
%
%LUCA: Direi proprio di si, anche se non mi pare lo dicano esplicitamente.
%D'altra parte non dicono esplicitamente neanche che il valore minimo di $n$
%dipende dal fatto che i dati sono centrati (non a media nulla). Si direbbe
%che gli venga naturale andare ``al pelo'' in modo giusto\dots\ no?
%
%GUIDO: Ho tolto superscipt F dalla prior sotto:
%inutile visto che usiamo solo questa prior.
%N.B.~la conjugate prior usata in precedenza, non ha, opportunamente,
%un'espressione sua specifica per la densit\`a;
%invece usiamo superscript D per default (improper) prior.
%Altro vantaggio nel non usare superscript: non serve superscript per marginal data.
%
%LUCA: Mi pare un'ottima soluzione. Ero stato tentato di togliere
%la $F$ anche nella presentazione di FBF, ma alla fine ho pensato
%che con la $F$ fosse pi\`u chiaro e, soprattutto, in quel frangente
%valesse la pena avere un simbolo \emph{ad hoc} che evidenziasse,
%negli argomenti dopo il segno di condizionamento,
%la dipendenza da $b$ e $\bY$ (poi giustamente sottintesa).
%\end{quote}

Using the default prior (\ref{pDOmega}),
and setting the fraction $b$ equal to $n_0/n$,
the fractional prior for the multivariate regression model (\ref{eq:multivRegrModel})
is given by
\begin{equation}
\label{FBFPrior}
p(\bB,\bOmega)\propto|\bOmega|^{\frac{a_D+n_0-p-q-2}{2}}\exp\left\{-\frac{n_0}{2}
\tr\left(\bOmega\left\{(\bB-\hat{\bB})^\top\tilde{\bC}\,(\bB-\hat{\bB})+\tilde{\bR}\right\}
\right)\right\},
\end{equation}
where $\tilde{\bC}=n^{-1}\bX^\top\!\bX$ and $\tilde{\bR}=n^{-1}\hat{\bE}^\top\!\hat{\bE}$;
this is clearly a matrix normal Wishart, having the form (\ref{eq:priorDensity}) with
$$
\underline{\bB}=\hat{\bB},\quad \bC=n_0\tilde{\bC},\quad
a=a_D+n_0-p-1,\quad\bR=n_0\tilde{\bR}.
$$
The prior (\ref{FBFPrior}) is proper under two conditions:
i) $a_D+n_0-p>q$, so that $a>q-1$;
ii) $n-p-1>q-1$, so that $\hat{\bE}^\top\!\hat{\bE}$ is (almost surely) positive definite.

Condition~ii), which simplifies to $n>p+q$,
will not be met in our intended application setting,
but we will be able to relax it in the context of sparse DAG models;
see section~\ref{subsec:DAG}. Condition~i) becomes $n_0>p+q$, if~$a_D=0$,
or $n_0>p+1$, if~$a_D=q-1$.
Clearly, the fraction $b=n_0/n$ must be larger when using the independence Jeffreys prior,
rather than the prior presented in \citet{Geis:Corn:1963},
especially if $q$ is much larger than~$1$.
Since the fraction of the data to be used should be as small
as possible, we recommend setting  $a_D=q-1$ (and $n_0=p+2$, so that $a=q$).
Notice that, for $b=n_0/n$ to be small, with $n_0>p+1$, we need $p<<n$,
which is a stronger requirement than assuming $n>p+1$
as in section~\ref{sec:multivariateRegression}.
However, as anticipated in section~3, and illustrated in the Discussion,
this requirement will be typically satisfied in our intended application setting.

Posterior updating of the hyper-parameters leads to
$$
\overline{\bB}=\hat{\bB},\quad \bC \mapsto n\tilde{\bC},
\quad a \mapsto a_D+n-p-1,\quad \bR \mapsto n\tilde{\bR},
$$
keeping into account that the fractional prior is to be used on
the likelihood raised to the $(1-b)$-th power, that is, on data with the same $\hat{\bB}$,
$\tilde{\bC}$ and $\tilde{\bR}$, but with $n-n_0$ in place of $n$.
Consequently, using (\ref{marginalDataDist}), one gets
$$
m(\bY)=
\frac{
K(\bX^\top\!\bX,\hat{\bE}^\top\!\hat{\bE},a_D+n-p-1)}
{(2 \pi)^\frac{nq}{2}
K(n_0n^{-1}\hat{\bX}^\top\!\hat{\bX},n_0n^{-1}\hat{\bE}^\top\!\hat{\bE},a_D+n_0-p-1)
},
$$
which after some simplifications leads to
\ben
\label{marginalDataDistFBF}
m(\bY)=
\pi^{-\frac{(n-n_0)q}{2}}
\frac{\Gamma_q(\frac{a_D+n-p-1}{2})}{\Gamma_q(\frac{a_D+n_0-p-1}{2})}
%\frac{\prod_{j=1}^q \Gamma\left(  \frac{a_D+n-p-1+1-j}{2}\right)}
%{\prod_{j=1}^q \Gamma\left(  \frac{a_D+n_0-p-1+1-j}{2}\right)}
\left(
\frac{n_0}{n}
\right)^{\frac{q(a_D+n_0)}{2}}
|\hat{\bE}^\top\!\hat{\bE}|^{-\frac{n-n_0}{2}}.
\een

In order to apply the method presented in section \ref{sec:covsel}
one also needs the fractional marginal likelihood based on the submatrix $\bY_J$
which only contains the columns of~$\bY$ belonging to the subset $J$,
which we write as $m(\bY_J)$.
This marginal likelihood is germane to our approach,
and represents a half-way house towards computing the entire fractional marginal likelihood
for a DAG model; see section \ref{subsec:DAG}.
Based on the results presented in section \ref{subsec:marginalDataDistrib},
it is immediate to conclude that $m(\bY_J)$ can be obtained
from equation~(\ref{marginalDataDistFBF}) upon making the substitutions
$$
%q \mapsto |J|; A \mapsto a-(q-|J|)=\aOmega+n_0-p-1-(q-|J|);
%\hat{\bE}  \mapsto \hat{\bE_J}=(\bY_J-\bX\hat{\bB}_J).
q \mapsto |J|,\quad a_D \mapsto a_D-|\bar{J}|,
\quad\hat{\bE} \mapsto \hat{\bE_J}=(\bY_J-\bX\hat{\bB}_J).
$$
These substitutions lead to
\ben
\label{marginalDataDistFBFSubsetMatrix}
m(\bY_J)=
\pi^{-\frac{(n-n_0)|J|}{2}}
\frac{\Gamma_{|J|}\left(\frac{a_D+n-p-1-|\bar{J}|}{2}\right)}
{\Gamma_{|J|}\left(\frac{a_D+n_0-p-1-|\bar{J}|}{2}\right)}
%\frac{\prod_{j=1}^{|J|} \Gamma\left(  \frac{a_D+n-p-1+1-q_{\bar{J}}-j}{2}\right)}
%{\prod_{j=1}^{|J|} \Gamma\left(  \frac{a_D+n_0-p-1+1-q_{\bar{J}}-j}{2}\right)}
\left(
\frac{n_0}{n}
\right)^{\frac{|J|(a_D+n_0-|\bar{J}|)}{2}}
|\hat{\bE}_J^\top\!\hat{\bE_J}|^{-\frac{n-n_0}{2}},
\een
which returns (\ref{marginalDataDistFBF}) upon setting $J=\{1,\dots,q\}$.

Formula~(\ref{marginalDataDistFBFSubsetMatrix}) derives from
$\bOmega_{JJ\cdot\bar{J}}\sim\W_{|J|}(a_J,\bR_{JJ})$ with $a_J=a_D+n_0-p-1-|\bar{J}|$,
which is (almost surely) proper if $n>p+|J|$.
The latter condition guarantees positive definiteness of $\bR_{JJ}$,
while $a_J=q-|\bar{J}|=|J|$ using our recommended choices for $a_D$ and $n_0$.
Therefore, formula~(\ref{marginalDataDistFBFSubsetMatrix}) provides us with
a valid value for $m(\bY_J)$, whenever $|J|<n-p$, even if $n\le p+q$.
We will exploit this fact in section~\ref{subsec:DAG}
to derive the marginal likelihood of a sparse DAG.
In the next paragraph we specialize~(\ref{marginalDataDistFBFSubsetMatrix})
to the simplest regression setup, which is of some interest in its own right.

If the sampling distribution corresponds to i.i.d.~observations
from a $q$-dimensional Gaussian density with expectation $\bmu $ and precision $\bOmega$,
conditionally on $\bmu$ and $\bOmega$, the corresponding marginal data distribution 
$m(\bY_J)$ can be derived from (\ref{marginalDataDistFBFSubsetMatrix})
upon setting $p=0$ (no predictors) and $\hat{\bE}=\bY-\bone_n \bar{\by}^\top$,
where $\bar{\by}$ is the $q$-dimensional vector of sample means.
In this way we obtain
\ben
\label{marginalDataDistFBFsubsetIID}
m(\bY_J)=
\pi^{-\frac{(n-n_0)|J|}{2}}
\frac{\Gamma_{|J|}\left(\frac{a_D+n-1-|\bar{J}|}{2}\right)}
{\Gamma_{|J|}\left(\frac{a_D+n_0-1-|\bar{J}|}{2}\right)}
%\frac{\prod_{j=1}^{|J|} \Gamma\left(  \frac{a_D+n-p-1+1-q_{\bar{J}}-j}{2}\right)}
%{\prod_{j=1}^{|J|} \Gamma\left(  \frac{a_D+n_0-p-1+1-q_{\bar{J}}-j}{2}\right)}
\left(
\frac{n_0}{n}
\right)^{\frac{|J|(a_D+n_0-|\bar{J}|)}{2}}
|\hat{\bE}_J^\top\!\hat{\bE_J}|^{-\frac{n-n_0}{2}},
\een
with $(\hat{\bE}^\top\!\hat{\bE})_{j j^{\prime}}=
\sum_i (y_{ij}-\bar{y}_j)(y_{ij^{\prime}}-\bar{y}_{j^{\prime}})$.
Expression~(\ref{marginalDataDistFBFsubsetIID})
complements formula~(22) in \citet{Cons:Laro:2012},
which holds for i.i.d.~$q$-dimensional Gaussian observations
with zero expectation.

%When the columns of $\bY$ are centered variables,
%formula~(22) of \citet{Cons:Laro:2012} should be considered as an approximation
%to~(\ref{marginalDataDistFBFsubsetIID}),
%which fully accounts for uncertainty in the data.

%\begin{quote}
%GUIDO: Ho verificato che (\ref{marginalDataDistFBF}) si specializza nei due casi
%gi\`a esaminati:
%\
%\begin{enumerate}
%\item it coincides with that of \citet{Cons:Laro:2012} for i.i.d.~$q$-dimensional
%multivariate Normal observations with zero mean, because one gets $p+1=0$,
%corresponding to the case of zero mean for the population $q$-dimensional
%vector of responses, and $\hat{\bE}=\bY$, so that
%$(\hat{\bE}^\top\!\hat{\bE})_{j j^{\prime}}=\sum_i \bY_{ij}\bY_{ij^{\prime}}$;
%
%\item it coincides with our notes for i.i.d.~q-dimensional multivariate Normal observations
%with mean $\mu$, because one gets $p+1=1$ and $\hat{\bE}=\bY-\bone_n \bar{\bY}^\top$,
%where $\bar{\bY}$ is the $q$-dimensional vector of sample means,
%and $(\hat{\bE}^\top\!\hat{\bE})_{j j^{\prime}}=
%\sum_i (\bY_{ij}-\bar{\bY}_j)(\bY_{ij^{\prime}}-\bar{\bY}_{j^{\prime}})$.
%\end{enumerate}
%
%LUCA: Bene, credo che meriti fare notare 1 ed esplicitare 2 (miglioria).
%\end{quote}

\section{Covariance selection}
\label{sec:covsel}
% in multivariate regression models using fractional priors

So far we have analyzed the Gaussian multivariate regression
model~(\ref{eq:multivRegrModel}) under the condition that $\bOmega$ is unconstrained.
We now assume instead that $\bOmega$ is constrained by a DAG,
aiming at graphical model (or covariance) selection
after having adjusted for the presence of covariates.
In section~\ref{subsec:DAG},
we develop an extension of the approach by \citet{Geig:Heck:2002}
explicitly for the regression setup.
An advantage of the method we present is that
the computation of the marginal likelihood for each DAG
only requires the results established, for an unconstrained $\bOmega$,
in section~\ref{subsec:fractionalMarLik}.
In section~\ref{subsec:UG decomposable},
taking advantage of the fact that any two Markov equivalent DAGs
obtain the same marginal likelihood,
we specify our results to the case of Gaussian decomposable graphical models,
and relate them to those obtained by \citet{Carv:Scot:2009}
in the i.i.d.~case.

%---------------------
%see also \citet[Sec.4]{Cons:Laro:2012}.
%---------------------

\subsection{Acyclic directed error structure}
\label{subsec:DAG}

Let $\D$ be a DAG with vertex set $\{1,\dots,q\}$.
Denote by $\pa_\D(j)$ the \textit{parents} of~$j$ in~$\D$,
that is, the set of all vertices in $\D$ from which an edge points to vertex~$j$,
and by $\by_{i\pa_\D(j)}$ the subvector of $\by_i$ indexed by ${\pa_\D(j)}$.
The multivariate normal sampling density of $\by_i \g \bB,\bOmega$,
assumed to be Markov with respect to $\D$, can be written as
\ben
\label{jointDensity}
f_\D(\by_i \g \btheta_\D)=\prod_{j=1}^q f_\D(y_{ij} \g \by_{i\pa_\D(j)}; \btheta_j),
\een
where $\btheta_j=(\balpha_j,\bgamma_j,\lambda_j)$ is defined by
\ben
\vat(y_{ij}\g\by_{i\pa_\D(j)};\bB,\bOmega)&=&
\bx_i^\top\balpha_j+\by_{i\pa_\D(j)}^\top\bgamma_j,\label{cond:mean}\\
\var(y_{ij}\g\by_{i\pa_\D(j)};\bB,\bOmega)&=&\lambda_j^{-1},\label{cond:var}
\een
and $\btheta_\D=(\btheta_1,\dots,\btheta_q)$ is the collection of all $\btheta_j$s;
recall that $\bx_i^\top$ is the $i$-th row of the design matrix $\bX$,
and notice that we drop dependence on $\D$ when we move from $\btheta_\D$
to its components (to lighten notation).
We illustrate below the reparameterization from $(\bB,\bOmega)$, with $\bOmega$ s.p.d.,
to $\btheta_\D$, with $\lambda_j>0$, $j=1,\dots,q$,
after a remark on~($\ref{jointDensity}$).

The conditional vertex density $f_\D(y_{ij}\g\by_{i\pa_\D(j)}; \btheta_j)$
is a univariate normal density with expectation and variance
given by~(\ref{cond:mean}) and~(\ref{cond:var}), respectively.
It is important to remark that such density depends on $\D$
only through $\pa_\D(j)$. In other words, if two DAGs $\D_1$ and $\D_2$
are such that $\pa_{\D_1}(j)=\pa_{\D_2}(j)$,
then the vertex-specific parameter $\btheta_j$ varies in the same space
under $\D_1$ and $\D_2$, because $\bgamma_j$ has the same dimension under the two DAGs,
and $f_{\D_1}(y_{ij} \g \by_{i\pa_{\D_1}(j)}; \btheta_j)=
f_{\D_2}(y_{ij} \g \by_{i\pa_{\D_2}(j)}; \btheta_j)$.
This property, called \emph{likelihood modularity} by \cite{Geig:Heck:2002},
represents a condition to be satisfied for the subsequent theory to apply.

Assume (without loss of generality) that the vertices of $\D$ are well-numbered;
this means that, if $j^\prime$ is a parent of $j$, then $j^\prime<j$.
If $\D$ is \textit{complete}, that is, it has all pairs of vertices joined by an edge,
then the parameters indexing the last ($j=q$) conditional vertex density
in~(\ref{jointDensity}) are:
$\balpha_q=\bB_q+\bB_{\bar{q}}\bOmega_{\bar{q}q}\Omega_{qq}^{-1}$,
$\bgamma_q=-\bOmega_{\bar{q}q}\Omega_{qq}^{-1}$, and $\lambda_q=\Omega_{qq}$,
where $\bar{q}=\{1,\dots,q-1\}=\pa_\D(q)$;
see the end of section~\ref{subsec:matrixNormal}.
Then, since $\by_{i\bar{q}}\g \bB,\bOmega\sim
\N_{q-1}(\bB_{\bar{q}}^\top\bx_i,\bOmega_{\bar{q}\bar{q}.q}^{-1})$,
one can repeat the previous argument and recursively find $\btheta_{q-1},\dots,\btheta_1$.
If $\D$ is \emph{incomplete}, its missing edges will impose
on~$\btheta_1,\dots,\btheta_q$ the constraints
$\gamma_{jj^\prime}=0$, $j^\prime\notin\pa_\D(j)$, $j=1,\dots,q$,
so that a corresponding set of constraints will be imposed on $\bOmega$.

We now show that, for complete DAGs, the transformation $(\bB,\bOmega)\mapsto\btheta_\D$
is a smooth bijection. This fact, which is arguably not new,
is reported here because it will be used below for constructing priors under general DAGs.
Given the recursive definition of~$(\bB,\bOmega)\mapsto\btheta_\D$,
it is enough to show that the transformation from
$(\bB,\bOmega)$, with $\bOmega$ s.p.d.,
to $(\bB_{\bar{q}},\bOmega_{\bar{q}\bar{q}\cdot q};\balpha_q,\bgamma_q,\lambda_q)$,
with $\bOmega_{\bar{q}\bar{q}\cdot q}$ s.p.d.~and $\lambda_q>0$, is a smooth bijection.
We do this by composing a few simpler reparameterizations.
First, we go from $(\bB,\bOmega)$, with $\bOmega$~s.p.d.,
to $(\bB,\bOmega_{\bar{q}\bar{q}\cdot q},\bOmega_{\bar{q}q},\Omega_{qq})$,
with $\bOmega_{\bar{q}\bar{q}\cdot q}$ s.p.d.~and $\Omega_{qq}>0$,
where the smooth inverse map is provided by
$\bOmega_{\bar{q}\bar{q}}=\bOmega_{\bar{q}\bar{q}\cdot q}+
\bOmega_{\bar{q}q}\Omega_{qq}^{-1}\bOmega_{\bar{q}q}^\top$,
recalling that $\bOmega_{q\bar{q}}=\bOmega_{\bar{q}q}^\top$
(unconstrained); see for instance \citet[Lemma~B.1]{Laur:1996}.
Then, we~trivially split $\bB$ as $(\bB_q,\bB_{\bar{q}})$,
and replace $\bB_q$ with $\balpha_q$,
where the smooth inverse map is given by
$\bB_q=\balpha_q-\bB_{\bar{q}}\bOmega_{\bar{q}q}\Omega_{qq}^{-1}$.
Finally, we reparameterize from $\bOmega_{\bar{q}q}$ to~$\bgamma_q$,
with smooth inverse map given by $\bOmega_{\bar{q}q}=-\Omega_{qq}\bgamma_q$,
and we rename $\Omega_{qq}$ as $\lambda_q$ (constrained to be positive).

In light of the above discussion, all complete DAGs define the same statistical model,
in which $\bOmega$ is unconstrained, and there is a smooth bijection
between their collections of parameters; in the terminology of \citet{Geig:Heck:2002}
we have \emph{complete model equivalence}, and \emph{regularity}.
It follows that any prior on $(\bB,\bOmega)$ will induce
a prior on $\btheta_\D$, if $\D$ is complete. We now show that,
if we let $(\bB,\bOmega)$ follow the conjugate prior (\ref{eq:priorDensity}),
then $p_\D(\btheta_\D)=\prod_{j=1}^qp_\D(\btheta_j)$,
so that $\btheta_1,\dots,\btheta_q$ will be \textit{a priori} independent.
This property is called  \emph{global parameter independence},
and represents a crucial ingredient in the approach of \citet{Geig:Heck:2002};
it can be obtained by recursive application of the following result.

\begin{prop}
If $\bB\g\bOmega\sim\N_{(p+1)\times q}(\underline{\bB}, \bC^{-1}, \bOmega^{-1})$
and $\bOmega\sim\W_q(a,\bR)$, then the pair
$(\bB_{\bar{q}},\bOmega_{\bar{q}\bar{q}\cdot q})$
is independent of the triple
$(\bB_q+\bB_{\bar{q}}\bOmega_{\bar{q}q}\Omega_{qq}^{-1},\bOmega_{\bar{q}q},\Omega_{qq})$.
\end{prop}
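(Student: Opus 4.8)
The plan is to establish the stated independence by a chain-rule factorization of the joint law of $(\bB,\bOmega)$, carried out \emph{after} reparameterizing $\bOmega$ into the triple $(\bOmega_{\bar q\bar q\cdot q},\bOmega_{\bar q q},\Omega_{qq})$ and replacing $\bB_q$ by $\balpha_q=\bB_q+\bB_{\bar q}\bOmega_{\bar q q}\Omega_{qq}^{-1}$. I would write $p(\bB,\bOmega)=p(\bOmega)\,p(\bB_{\bar q}\g\bOmega)\,p(\bB_q\g\bB_{\bar q},\bOmega)$ and show, term by term, that in the new coordinates this splits into a factor depending only on $(\bB_{\bar q},\bOmega_{\bar q\bar q\cdot q})$ and a factor depending only on $(\balpha_q,\bOmega_{\bar q q},\Omega_{qq})$. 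All three ingredients are already available in the excerpt: the Wishart block result~(\ref{eq:distrOmegaMar}), and the matrix-normal column-marginalization and column-conditioning identities~(\ref{eq:marginalization}) and~(\ref{eq:conditioning}), which I would apply to the $(p+1)\times q$ matrix $\bB$ (namely with $n$ replaced by $p+1$, row covariance $\bC^{-1}$, and column covariance $\bOmega^{-1}$).

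First I would use~(\ref{eq:distrOmegaMar}) with $J=\bar q$ (so $|\bar J|=1$): the map $\bOmega\leftrightarrow(\bOmega_{\bar q\bar q\cdot q},\bOmega_{\bar q q},\Omega_{qq})$ is a smooth bijection under which $\bOmega_{\bar q\bar q\cdot q}\sim\W_{q-1}(a-1,\bR_{\bar q\bar q})$ \emph{independently} of $(\bOmega_{\bar q q},\Omega_{qq})$, which disposes of the $p(\bOmega)$ factor. Next, column marginalization~(\ref{eq:marginalization}) applied to $\bB\g\bOmega\sim\N_{p+1,q}(\underline{\bB},\bC^{-1},\bOmega^{-1})$, together with~(\ref{Sigmavv}) (giving $(\bOmega^{-1})_{\bar q\bar q}=\bOmega_{\bar q\bar q\cdot q}^{-1}$), yields $\bB_{\bar q}\g\bOmega\sim\N_{p+1,q-1}(\underline{\bB}_{\bar q},\bC^{-1},\bOmega_{\bar q\bar q\cdot q}^{-1})$, so that $p(\bB_{\bar q}\g\bOmega)$ depends on $\bOmega$ only through $\bOmega_{\bar q\bar q\cdot q}$. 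Finally, column conditioning~(\ref{eq:conditioning}) applied to $\bB$ gives $\bB_q\g\bB_{\bar q},\bOmega\sim\N_{p+1,1}(\underline{\bB}_q-(\bB_{\bar q}-\underline{\bB}_{\bar q})\bOmega_{\bar q q}\Omega_{qq}^{-1},\bC^{-1},\Omega_{qq}^{-1})$; since, for fixed $(\bB_{\bar q},\bOmega)$, the passage $\bB_q\mapsto\balpha_q$ is a pure translation, the conditional law of $\balpha_q$ given $(\bB_{\bar q},\bOmega)$ is $\N_{p+1,1}(\underline{\bB}_q+\underline{\bB}_{\bar q}\bOmega_{\bar q q}\Omega_{qq}^{-1},\bC^{-1},\Omega_{qq}^{-1})$, which no longer involves $\bB_{\bar q}$ and involves $\bOmega$ only through $(\bOmega_{\bar q q},\Omega_{qq})$.

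To conclude, set $A=(\bB_{\bar q},\bOmega_{\bar q\bar q\cdot q})$, $B=(\bOmega_{\bar q q},\Omega_{qq})$ and $C=\balpha_q$. Step one gives $\bOmega_{\bar q\bar q\cdot q}\ind B$; combining this with the fact from step two that $\bB_{\bar q}\ind B\g\bOmega_{\bar q\bar q\cdot q}$, the contraction property of conditional independence yields $A\ind B$; step three says $C\ind A\g B$; and a second application of contraction gives $A\ind(B,C)$, which is precisely the assertion. The one point that genuinely needs care is step three: one must notice that the particular combination $\balpha_q=\bB_q+\bB_{\bar q}\bOmega_{\bar q q}\Omega_{qq}^{-1}$ is exactly the shift that cancels the $\bB_{\bar q}$-dependence of the conditional mean of $\bB_q$ (this is the origin of Geiger and Heckerman's parametrization). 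A more computational alternative is to substitute the reparameterization directly into the matrix normal Wishart density~(\ref{eq:priorDensity}), split the quadratic form $\tr(\bOmega\{(\bB-\underline{\bB})^\top\bC(\bB-\underline{\bB})+\bR\})$ using the block decomposition of $\bOmega$ and the Schur complement, and use $|\bOmega|=|\bOmega_{\bar q\bar q\cdot q}|\,\Omega_{qq}$ together with the unit Jacobian of the transformation; this works too but is appreciably more laborious.
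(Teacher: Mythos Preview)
Your proposal is correct and essentially identical to the paper's own proof: both reparameterize, then use the Wishart block independence~(\ref{eq:distrOmegaMar}) for $p(\bOmega)$, column marginalization~(\ref{eq:marginalization}) with~(\ref{Sigmavv}) for $p(\bB_{\bar q}\g\bOmega)$, and column conditioning~(\ref{eq:conditioning}) for $p(\balpha_q\g\bB_{\bar q},\bOmega)$, concluding by factorization of the joint density (your contraction steps are just a spelled-out version of this). Incidentally, your conditional mean $\underline{\bB}_q+\underline{\bB}_{\bar q}\bOmega_{\bar q q}\Omega_{qq}^{-1}$ has the correct sign; the paper's displayed $\underline{\bB}_q-\underline{\bB}_{\bar q}\bOmega_{\bar q q}\Omega_{qq}^{-1}$ appears to be a typo, immaterial to the argument.
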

\begin{proof}
Consider the reparameterization in terms of $\bOmega_{\bar{q}\bar{q}\cdot q}$ s.p.d.,
$\bOmega_{\bar{q}q}$, $\Omega_{qq}>0$, $\bB_{\bar{q}}$,
$\balpha_q=\bB_q+\bB_{\bar{q}}\bOmega_{\bar{q}q}\Omega_{qq}^{-1}$,
and factorize the corresponding joint parameter density as
$$
p(\balpha_q\g\bB_{\bar{q}},\bOmega_{\bar{q}\bar{q}\cdot q},\bOmega_{\bar{q}q},\Omega_{qq})
\times p(\bB_{\bar{q}}\g\bOmega_{\bar{q}\bar{q}\cdot q},\bOmega_{\bar{q}q},\Omega_{qq})
\times p(\bOmega_{\bar{q}\bar{q}\cdot q},\bOmega_{\bar{q}q},\Omega_{qq}).
$$
We know, from our statement following (\ref{eq:distrOmegaMar}),
that $\bOmega_{\bar{q}\bar{q}\cdot q}$
is independent of $(\bOmega_{\bar{q}q},\Omega_{qq})$
under the assumed distribution for $\bOmega$.
Moreover, from the law of~$\bB\g\bOmega$,
%$\bB\g\bOmega\sim\N_{(p+1)\times q}(\underline{\bB}, \bC^{-1}, \bOmega^{-1})$,
we obtain
\be
\bB_{\bar{q}}\g
\bOmega_{\bar{q}\bar{q}\cdot q},\bOmega_{\bar{q}q},\Omega_{qq}&\sim&
\N_{(p+1),(q-1)}(\underline{\bB}_{\bar{q}},
\bC^{-1},\bOmega_{\bar{q}\bar{q}\cdot q}^{-1}),\\
\balpha_q\g
\bB_{\bar{q}},\bOmega_{\bar{q}\bar{q}\cdot q},\bOmega_{\bar{q}q},\Omega_{qq}&\sim&
\N_{p+1}(\underline{\bB}_q-\underline{\bB}_{\bar{q}}\bOmega_{\bar{q}q}\Omega_{qq}^{-1},
\Omega_{qq}^{-1}\bC^{-1}),
\ee
first using column marginalization (\ref{eq:marginalization}), and (\ref{Sigmavv}),
then using column conditioning (\ref{eq:conditioning}).
Therefore, the joint density of $\bOmega_{\bar{q}\bar{q}\cdot q}$,
$\bOmega_{\bar{q}q}$, $\Omega_{qq}$, $\bB_{\bar{q}}$, and $\balpha_q$,
factorizes as
$$
p(\balpha_q\g\bOmega_{\bar{q}q},\Omega_{qq})
\times p(\bB_{\bar{q}}\g\bOmega_{\bar{q}\bar{q}\cdot q})
\times p(\bOmega_{\bar{q}\bar{q}\cdot q})
\times p(\bOmega_{\bar{q}q},\Omega_{qq}),
$$
which implies the desired result.
\end{proof}

If $\D$ is incomplete, global parameter independence
can be guaranteed by letting $p_\D(\btheta_\D)=\prod_{j=1}^q p_{\C_j}(\btheta_j)$,
where $\C_j$ is any complete DAG such that $\pa_{\C_j}(j)=\pa_\D(j)$.
% $j=1,\dots,q$.
The actual choice of each $\C_j$ is immaterial,
because all $j^\prime\notin\pa_\D(j)$, $j^\prime\neq j$,
must follow $j$ in $\C_j$, and thus $p_{\C_j}(\btheta_j)$ is induced by
the law of $(\bB_F,\bOmega_{FF\cdot\bar{F}})$,
where $F=\fa_\D(j)=\pa_\D(j) \cup \{j \}$ is the \textit{family} of $j$ in $\D$.
Notice that $j$ goes necessarily last in $\fa_\D(j)$,
and recall that $\bB_F\g\bOmega_{FF\cdot\bar{F}}\sim
\N_{(p+1)\times|F|}(\underline{\bB}_F, \bC^{-1}, \bOmega_{FF\cdot\bar{F}}^{-1})$,
by column marginalization, while $\bOmega_{FF\cdot\bar{F}}\sim\W_{|F|}(a-|F|,\bR_{FF})$,
as per~(\ref{eq:distrOmegaMar}).
%
% On global parameter independence and prior modularity:
% Both these assumptions had already been used in earlier works
% containing Bayesian analyses of DAG models,
% as recounted in \citet[Sec.~9.2 \& 9.4]{Cowe:Dawi:Laur:Spie:1999}.
% They are feasible within the Gaussian setting because parameters
% pertaining to distinct local structures are variation independent.
%
Assigning parameter priors in this way,
we also guarantee \emph{prior modularity}:
$p_{\D_1}(\btheta_j)=p_{\D_2}(\btheta_j)$, if $\pa_{\D_1}(j)=\pa_{\D_2}(j)$.
This is the last ingredient required by the method of \citet{Geig:Heck:2002}
to compute the marginal likelihood of \textit{any} DAG model,
based on the assignment of the \textit{single} prior (\ref{eq:priorDensity}).
We now detail the computations for our regression setting.

The marginal density of the  matrix $\bY$ under the DAG $\D$,
equivalently the marginal likelihood of $\D$ observing $\bY$,
can be found as $m_\D(\bY)=\int f_\D(\bY\g\btheta_\D)p_\D(\btheta_\D)d\btheta_D$,
where $f_\D(\bY\g\btheta_\D)=\prod_{i=1}^nf_\D(\by_i \g \btheta_\D)$
with $f_\D(\by_i\g\btheta_\D)$ given by (\ref{jointDensity}),
and furthermore $p_\D(\btheta_\D)=\prod_{j=1}^qp_\D(\btheta_j)$
by global parameter independence. We can thus write
\be
m_\D(\bY)&=&
\prod_{j=1}^q\int p_\D(\btheta_j)\prod_{i=1}^n
f_\D(y_{ij}\g\by_{i\pa_\D(j)};\btheta_j)d\btheta_j\\
&=&
\prod_{j=1}^q\int p_{\C_j}(\btheta_j)\prod_{i=1}^n
f_{\C_j}(y_{ij}\g\by_{i\pa_{\C_j}(j)};\btheta_j)d\btheta_j\\
&=&
\prod_{j=1}^q\int p_{\C_j}(\btheta_j)
f_{\C_j}(\bY_{j}\g \bY_{\pa_{\C_j}(j)};\btheta_j)d\btheta_j,
\ee
where the second equality is  based on prior and likelihood modularity.
It follows that
\ben
\label{formula18OfGHnew}
m_\D(\bY)=\prod_{j=1}^q m_{\C_j}(\bY_j\g\bY_{\pa_{\C_j}(j)})=
\prod_{j=1}^q\frac{m_{\C_j}(\bY_{\fa_{\C_j}(j)})}{m_{\C_j}(\bY_{\pa_{\C_j}(j)})}=
\prod_{j=1}^q\frac{m(\bY_{\fa_{\D}(j)})}{m(\bY_{\pa_{\D}(j)})},
\een
recalling that $\pa_{\C_j}(j)\equiv\pa_\D(j)$, by construction,
and $m_{\C_j}(\cdot)$ is nothing else but $m(\cdot)$
under our prior (\ref{eq:priorDensity}), by complete model equivalence and regularity.

The great advantage of (\ref{formula18OfGHnew}) is that
the computations of the required terms in the rightmost product
can be done under the assumption that the precision matrix $\bOmega$ is unconstrained,
and thus one can use the standard matrix normal Wishart prior~(\ref{eq:priorDensity}).
Notice that the DAG~$\D$ enters (\ref{formula18OfGHnew}) only through the specification
of the set of parents, $\pa_{\D}(j)$, for each vertex~$j$.
The expressions for $m(\bY_{\fa_\D(j)} )$ and $m(\bY_{\pa_\D(j)})$
are available in section~\ref{subsec:marginalDataDistrib},
upon replacing $J$ with ${\fa_\D(j)}$ and ${\pa_\D(j)}$, respectively.

Prior~(\ref{eq:priorDensity}) requires to specify the
hyper-parameters $\underline{\bB}$, $\bC$, $a$, and~$\bR$.
This can be problematic, especially when the dimension of the problem is large,
and we know that marginal likelihoods are quite sensitive
to changes in the hyper-parameters; see \citet[Ch.~7]{Ohag:Fors:2004}.
We therefore suggest an objective choice,
based on the fractional matrix normal Wishart prior~(\ref{FBFPrior})
applied to the Gaussian likelihood~(\ref{eq:lik}) with $(n-n_0)$ observations
and the same $\hat{\bB}$, $\tilde{\bC}$ and $\tilde{\bR}$ as the data.
With this choice, the terms $m(\bY_{\fa_\D(j)})$ and $m(\bY_{\pa_\D(j)})$
in formula~(\ref{formula18OfGHnew}) can be computed
from~(\ref{marginalDataDistFBFSubsetMatrix})
%letting $J=\fa_\D(j)$ and $J=\pa_\D(j)$, respectively,
provided that the condition $|\fa_\D(j)|=|\pa_\D(j)|+1<n-p$ is satisfied.
This condition guarantees a valid value for
$m(\bY_j|\bY_{\pa_\D(j)})=m(\bY_{\fa_\D(j)})/m(\bY_{\pa_\D(j)})$
by granting a proper distribution to the marginal precision matrix
$\bOmega_{\fa_\D(j)\fa_\D(j)\cdot\overline{\fa_\D(j)}}$;
see section~\ref{subsec:fractionalMarLik}.
In this way, formula (\ref{formula18OfGHnew}) provides us with
a valid marginal likelihood (product of $q$ valid conditional marginal likelihoods
given parent observations) for every DAG $\D$ whose parent sets have size smaller than
the number of observations minus the number of columns in the design matrix $\bX$
(number of predictors in the model plus one).
The latter is a sparsity condition on the structure of the DAG,
involving the maximal number of parents across vertices,
which is quite reasonable in our intended application setting
(eQTL analysis) as discussed in the Introduction.

% For the practical evaluation of (\ref{formula18OfGHnew}) proceed as follows.
% Consider first the numerator. The generic term $m(\bY_{\fa_{\D}(j)})$ is computed
% using formula (\ref{marginalDataDistFBFSubsetMatrix}) setting in the latter
% $J=\fa_{\D}(j)$; as a consequence: $|J|$ becomes $|\fa_{\D}(j)|$
% (the cardinality of the set $|\fa_{\D}(j)| $), $\bar{J}$ becomes $q-|\fa_{\D}(j)|$,
%
% $\hat{\bE_J}$ becomes $\hat{\bE}_{\fa_{\D}(j)}=
% (\bY_{\fa_{\D}(j)}-\bX\hat{\bB}_{\fa_{\D}(j)})$
% where $\hat{\bE_J}=(\bY_J-\bX\hat{\bB}_J)$,
%  $\hat{\bB}_{\fa_{\D}(j)}= (\bX^\top\!\bX)^{-1}\bX^\top\bY_{\fa_{\D}(j)}$.
%
% For the evaluation of the generic term $m(\bY_{\pa_{\D}(j)})$ in the denominator
% of (\ref{formula18OfGHnew}) proceed exactly as in the numerator using throughout
% $\pa_{\D}(j)$ instead of $\fa_{\D}(j)$.
%
% For (\ref{marginalDataDistFBFSubsetMatrix}) to be valid with $J=\fa_\D(j)$,
% the matrix $\hat{\bE}_{\fa_\D(j)}^\top\!\hat{\bE}_{\fa_\D(j)}$
% must be positive definite, which requires $n \geq |\pa_\D(j)+1|$;
% this will also guarantee that (\ref{marginalDataDistFBFSubsetMatrix}) be valid
% with $J=\pa_\D(j)$. Since the condition must hold
% for all $j=1, \dots, q$, we will require $n\ge\max_j |\pa_\D(j)+1|$.
% In practice, we will consider DAGs $\D$ such that $\max_j|\pa_\D(j)|\le n-1$.

\subsection{Decomposable error structure}
\label{subsec:UG decomposable}

It is often appropriate to model the conditional independence structure
of a set of variables in terms of an undirected graph;
see \citet{Laur:1996} for an authoritative exposition.
This is for instance the approach followed in  \citet{Caietal:2013}
and \citet{Chenetal:2013} for the analysis of genetical genomics data.
With reference to the Gaussian multivariate regression model (\ref{eq:multivRegrModel}),
this means that the precision matrix $\bOmega$ of the response vector $\by_i$
is constrained by an undirected graph $\G$:
if an edge is missing between $j$ and $j^{\prime}$ in $\G$,
then $\bOmega_{j j^{\prime}}=0$. Equivalently, $\by_i$ is Markov with respect to~$\G$,
that is, if $j$ and $j^\prime$ are not joined by an edge in~$\G$,
the responses $y_{ij}$ and~$y_{ij^{\prime}}$ are conditionally independent,
under the sampling distribution, given all remaining responses;
in symbols $y_{ij} \ind y_{ij^{\prime}}\g
\by_{i(\{1,\dots,q\}\setminus\{j,j^{\prime}\})}, \bB, \bOmega$
\citep{Drto:Perl:2004}.

To enhance tractability, the undirected graph $\G$ is often assumed to satisfy
some conditions, such as \textit{decomposability}; see for instance \citet{Bhad:Mall:2013}.
It is well known that a decomposable $\G$ is Markov equivalent
to some DAG \citep{Ande:Madi:Perl:1997}.
Specifically, one can always well-number the vertices of $\G$
and construct a directed version $\G^<$,
which is a DAG  Markov equivalent to $\G$; see \citet[p.~18]{Laur:1996}.
It follows that the methodology developed in section~\ref{subsec:DAG}
can also be applied to decomposable graphs,
because the marginal likelihoods given by such methodology
are invariant with respect to Markov equivalence.
Indeed, the proof of Theorem~4 in~\cite{Geig:Heck:2002}
directly carries over into our regression setting.

% \citet[Sec.~6]{Cons:Laro:2012} used the same idea
% to apply the method of \citet{Geig:Heck:2002} to decomposable UGs.
%----------------------------------
% Notice that the decomposability of an UG $\G$ is equivalent to:
% i) the cliques of $\G$ can be ordered to form a perfect sequence;
% ii)  the vertices of $\G$ admit a perfect numbering;
% see \citet[prop.~2.17]{Laur:1996}.
%----------------------------------
% Let $C_1, \ldots, C_K$ be a perfect sequence of cliques  for the decomposable UG $\G$.
% For $k=2, \ldots, K$, define $H_k= C_1 \cup \ldots \cup C_k$;
% $S_k= C_k \cap H_{k-1}$; $R_k=C_k \setminus H_{k-1}$.
% Call $H_k$ the history,  $S_k$ the separator and $R_k$ the residual.
% Notice that $C_1 \cup R_2 \cup \ldots \cup R_K=V$, where $V$ is the vertex set of $\G$.
% Additionally  $R_k \cap R_{k^{\prime}}=\emptyset$, $k \neq k^{\prime}$.
% Let the vertices of $\G$ be numbered with first those in $C_1$, then those in $R_2$,
% $R_3$, and so on. The numbering so obtained is perfect; see \citet[Lemma 2.12]{Laur:1996}.

In practice, the marginal likelihood of the model defined by the decomposable graph $\G$,
$m_\G(\bY)=m_{\G^<}(\bY)$, will be given by (\ref{formula18OfGHnew}) with $\D=\G^<$.
Since the parameter prior used to compute (\ref{formula18OfGHnew})
satisfies global parameter independence,
$m_{\G^<}(\bY)$ is readily seen to be $\G^<$-Markov;
see for instance \citet[sect.~9.4]{Cowe:Dawi:Laur:Spie:1999}.
Then $m_{\G}(\bY)$ is also $\G$-Markov,
and thus it admits the representation
   \ben
   \label{factorizationUG}
   m_\G(\bY)=\frac{\prod_{C \in {\cal C}}
   m(\bY_C)}{\prod_{S \in {\cal S}} m(\bY_S)},
   \een
where ${\cal C}$ is the set of cliques, and  ${\cal S}$ the set of separators,
of the decomposable graph~$\G$; see \citet{Laur:1996}.
The explicit expression of each factor appearing in (\ref{factorizationUG})
can be deduced from (\ref{marginalDataDist})
as explained in section~\ref{subsec:marginalDataDistrib}.

In particular, when using the fractional matrix normal Wishart prior~(\ref{FBFPrior}),
one computes $m(\bY_C)$ and $m(\bY_S)$ in~(\ref{factorizationUG})
by means of~(\ref{marginalDataDistFBFSubsetMatrix}), with $J=C$ and~$J=S$, respectively,
assuming $|C|<n-p$ (hence $|S|<n-p$) whenever $C$ is a clique
($S\subseteq C$ a separator) of~$\G$.
In this way, we cope with decomposable graphs
whose clique sizes are smaller than the number of observations
minus the number of predictors in the model.
This is again a sparsity assumption on the graph,
well-suited to our intended application setting,
which grants a proper distribution to
$\bOmega_{CC\cdot\bar{C}}$ (hence to $\bOmega_{SS\cdot\bar{S}}$);
see section~\ref{subsec:fractionalMarLik}.
%
% \ben
% \label{factorizationUG with FBF prior}
%    m(\bY_{C})=
% \pi^{-\frac{(n-n_0)|C|}{2}}
% \frac{\Gamma_{|C|}(\frac{a_D+n-p-1-|\bar{C}|}{2})}
% {\Gamma_{|C|}(\frac{a_D+n_0-p-1-|\bar{C}|}{2})}
% \left(\frac{n_0}{n}\right)^{\frac{|C|(a_D+n_0-|\bar{C}|)}{2}}
% |\hat{\bE}_{C}^\top\!\hat{\bE}_{C}|^{-\frac{n-n_0}{2}};
% \een
% see formula (\ref{marginalDataDistFBFSubsetMatrix}).
% The expression for  $m(\bY_{S}l)$ in (\ref{factorizationUG}) is analogous:
% just replace $C$ with~$S$ in (\ref{factorizationUG with FBF prior}).
%
We remark that formulae~(\ref{factorizationUG}) and~(\ref{marginalDataDistFBFSubsetMatrix})
generalize to the multivariate regression setup
the results established by \citet{Carv:Scot:2009} for i.i.d.~Gaussian observations
with zero expectation. As a special case,
formulae~(\ref{factorizationUG}) and~(\ref{marginalDataDistFBFSubsetMatrix})
also cover the i.i.d.~Gaussian setup with unknown expectation.

% EVITIAMO DI IMPELAGARCI
% Formula~(\ref{factorizationUG}) is computationally more efficient
% than the companion formula~(\ref{formula18OfGHnew}),
% because the number of cliques (and separators) is smaller than the number of vertices;
% this however should be traded-off against the task of finding cliques and separators,
% which can be burdensome for large graphs.

\section{Discussion}
\label{sec:disc}

Motivated by covariate-adjusted covariance selection under sparsity,
this paper proposes an objective Bayes method for computing the marginal likelihood
of a multivariate regression model with normally distributed errors
whose covariance matrix is constrained by a DAG.
This represents an essential ingredient to obtain a posterior probability
over the space of covariate-adjusted DAG models.
Since the proposed method is invariant with respect to Markov equivalence,
it can also be used to select covariate-adjusted decomposable models.
%
% which are the kind of model considered by \citet{Bhad:Mall:2013}.
%
% advantages of our method (WE WANT TO FOCUS ON ITS LIMITATIONS)
% Our method, which  relies on the general approach proposed by \citet{Geig:Heck:2002} for
% assigning parameter priors for a collection of DAG models, presents several advantages: 
% i) it requires  as input a standard noninformative prior
% on the space of unconstrained parameters (corresponding to a complete DAG)
% ii) the marginal data distribution, for any DAG model, is available in closed form;
% iii) the marginal likelihood is constant over the Markov equivalence class of DAGs,
% consistently with the view that the underlying DAG is a model of conditional independence,
% and not of causal relationships;
% iv) it provides the marginal likelihood of any undirected decomposable graphical model
% at no extra cost. 
%
Although we do not explicitly address variable selection,
our results for the marginal likelihood can be used for Bayesian joint variable
and covariance selection, as discussed in \citet{Bhad:Mall:2013}.

In practice, as we remark at the beginning of section~\ref{sec:multivariateRegression},
variable selection is needed to apply our method
whenever the total number of predictors $p_\star$ is comparable to,
or larger than, the number of observations;
this is a typical scenario in genetical genomics applications.
% Crude as it may~be
Restricting our attention to models including only $p\ll n$ predictors,
so that our objective analysis becomes feasible,
turns out to be adequate for settings where
sparse models are of interest.
For instance, the two simulations
considered by \citet{Bhad:Mall:2013} have:
i) $p_\star=498$, $q=300$, and $n=120$,
with $p=11$ for the actual data generating distribution;
ii) $p_\star=498$, $q=100$, and $n=120$,
with $p=3$ for the actual data generating distribution.
Similarly, their real data analysis (eQTL Analysis on Publicly Available Human Data)
has $p_\star=3125$, $q=100$, and $n=60$, with $p=1$ or $p=2$
identified as the most likely values.

%i)  $p=11$ for the true generating model in Simulation~1;
%ii) $p=1$, with the actual predictor (SNP) distinct depending on
% which of three classes the response belongs to
% (three classes having each a single, distinct, SNP as true predictor)
% NOT QUITE (E.G. THREE SNPs AFFECT RESPONSES 17-20)

% ANDREBBE FATTO UN CHECK DEL GENERE ANCHE SU LAVORI DI Cai et al e Chen et al
% OK, UNICA COSA LORO POSSONO AVERE PREDITTORI DIVERSI SU RISPOSTE DIVERSE
% MENTRE PER NOI (& MALLICK DIREI) $\bB$ has to be unconstrained! (e.g., no SUR)

\citet{Bhad:Mall:2013} currently derive their results for decomposable models
under a weakly informative prior which requires to subjectively specify
three scalar hyper-parameters. In particular, they use a hyper-inverse Wishart
on $\bSigma$ with scale parameter equal to the identity matrix multiplied by a constant.
The latter proves to be crucial and need be fixed with %the utmost 
care, because it acts as a global shrinkage parameter.
Our objective prior, with its simple method for obtaining the marginal likelihood,
should provide a useful alternative to their prior specification.
On the other hand, our methodology for computing the marginal likelihood
can also be implemented starting from a single subjectively specified matrix normal Wishart
prior under any complete DAG model, then applying the general results of
section~\ref{subsec:marginalDataDistrib}
in the context of DAG models as described in section~\ref{subsec:DAG}.
In this case, the sparsity conditions relating the sample size $n$,
the number of predictors $p$ and the maximal size of the cliques,
which we had to impose to make our objective Bayes analysis possible,
could be relaxed.

%-------------------------
% usiamo sparsity sia su regression structure;
% cfr la nostra discussione su n e $n_0$ entrambi piu' grandi di p+1
%
% sia per il DAG;
% cfr la nostra discussione alla fine della Subsection 5.1 dove scriviamo
%
% \vspace{0.5cm}
%
%  Sulla questione della sparsita' un'ulteriore riflessione
%
% Chiaramente $n_0>p+1$ discende dal fatto che usiamo un'impostazione oggettiva (FBF).
% Quindi se usassimo una matrix normal Wishart prior propria questo problema non si porrebbe
%
% Tuttavia noi partiamo anche con $n>p+1$. Il motivo mi pare e' che noi vogliamo scrivere
% la likelihood usando OLS $\hat{\bB}$. Questo si riflette nel parameter updating della (15)
% solo attraverso la matrice $\bD$ che richiede $\hat{\bB}$.
% Tuttavia dovrebbe essere possibile fare tutta la conjugate analysis
% senza questa assunzione.
%
% Ti torna?
%
% Se cosi' fosse, potremmo scrivere qualcosa del tipo:
%
% Although we carried out our analysis using an objective prior
% our results are valid also under any subjectively specified
% proper matrix normal Wishart prior.
% \vspace{0.5cm}
%-------------------------
% LUCA: immagino di si, ma sarebbe da verificare;
% per il momento mi pare vada bene la formulazione generica
% usata sopra: "could be relaxed".
%-------------------------
% \citet{Geig:Heck:2002} show that, if $q \geq 3$,
% global parameter independence holds for any complete DAG
% if and only if the prior on $(\bmu,\bOmega)$ is Normal-Wishart (Theorem~10).
% We do not provide such a characterization,
% although one may well conjecture that it holds true.
%-------------------------

Finally, our method does not cope with non-decomposable undirected graphical models.
Bayesian covariate-adjusted covariance selection in general undirected graphical models
is beyond the scope of this paper, and will present the obvious challenge of providing
an efficient method for computing the marginal likelihood;
see \citet{Carvetal:2007}, \citet{Wang:Carv:2010}, \citet{Lenk:2013}. 
However, working within the class of decomposable graphs can still be very effective,
even when the true graph is not decomposable; see \citet{Fitchetal:2014}
for asymptotic results on the posterior model probabilities,
and for a high performing
%Bayesian feature inclusion
stochastic search of the model space.

\section*{Acknowledgements}
Work partially supported by a D1-grant from Universit\`{a} Cattolica del Sacro Cuore.
The authors are grateful to Alberto Roverato for pointing out a useful reference.

\end{document}